\newcolumntype{+}{!{\vrule width 2pt}}
\newlength\savedwidth
\renewcommand{\@biblabel}[1]{\quad#1.}
\newcommand{\tmem}[1]{{\em #1\/}}
\newcommand{\tmmathbf}[1]{\ensuremath{\boldsymbol{#1}}}
\newcommand{\tmop}[1]{\ensuremath{\operatorname{#1}}}
\newcommand{\tmstrong}[1]{\textbf{#1}}
\newcommand{\tmtextbf}[1]{{\bfseries{#1}}}
\newcommand{\tmtexttt}[1]{{\ttfamily{#1}}}
\newcommand{\tmverbatim}[1]{{\ttfamily{#1}}}
\newenvironment{itemizeminus}{\begin{itemize} }{\end{itemize}}
\newenvironment{proof}{\noindent\textbf{Proof\ }}{\hspace*{\fill}$\Box$\medskip}
\newenvironment{tmcode}[1][]{\begin{alltt} }{\end{alltt}}
\newenvironment{tmindent}{\begin{tmparmod}{1.5em}{0pt}{0pt} }{\end{tmparmod}}
\newenvironment{tmparmod}[3]{\begin{list}{}{\setlength{\topsep}{0pt}\setlength{\leftmargin}{#1}\setlength{\rightmargin}{#2}\setlength{\parindent}{#3}\setlength{\listparindent}{\parindent}\setlength{\itemindent}{\parindent}\setlength{\parsep}{\parskip}} \item[]}{\end{list}}
\newtheorem{proposition}{Proposition}
\begin{document}
\vspace*{0.2in}

\begin{flushleft}
{\Large
\textbf\newline{SymDPoly: symmetry-adapted moment relaxations for noncommutative
polynomial optimization} 
}
\newline
\\
Rosset Denis\textsuperscript{1*}
\\
\bigskip
\textbf{1} Perimeter Institute for Theoretical Physics, Waterloo,
  Ontario, Canada, N2L 2Y5
\\
\bigskip

* physics@denisrosset.com

\end{flushleft}

\section*{Abstract}
Semidefinite relaxations are widely used to compute upper bounds on the
objective of optimization problems involving noncommutative polynomials.
Such optimization problems are prevalent in quantum information. We present
an algorithm able to discover automatically and exploit the symmetries
present in the problem formulation. We also provide an open source software
library written in Scala
(\href{https://github.com/denisrosset/symdpoly}{https://github.com/denisrosset/symdpoly})
that computes symmetry-adapted semidefinite relaxations with interfaces to a
variety of open-source and commercial semidefinite solvers. We discuss the
advantages of symmetrization, namely reductions in memory use, computation
time, and increase in the solution precision.

\section*{Introduction}

Semidefinite programming is a prevalent tool to study quantum systems. As
density matrices are semidefinite matrices of trace one, semidefinite programs
naturally address questions related to unambiguous state
discrimination~{\cite{Eldar2003}}, entanglement
detection~{\cite{Doherty2004}}, entanglement
measures~{\cite{Rains2001,Vidal2002}}, measurement
incompatibility~{\cite{Wolf2009}} and steering~{\cite{Cavalcanti2017b}}, among
other applications. Polynomial optimization problems are also frequent in quantum information.
Systems of commutative polynomial equations appear in the characterization of
sets of local correlations; for example in the study of
network-locality~{\cite{Rosset2017a}}, the study of causal
structures~{\cite{Lee2015}}, and maximal violations of Bell inequalities for
given states~{\cite{Liang2007a}}. These polynomial problems can be handled by
a hierarchy of semidefinite relaxations based on sums of squares
formulations~{\cite{Ferrier1998,Nesterov2000,Lasserre2001,Parrilo2003}}.
Other questions are neatly formulated as optimizations over noncommutative
polynomial rings of operators. In this second setting, moment relaxations are
used to characterize the quantum set of
correlations~{\cite{Navascues2007,Doherty2008,Navascues2008a}}, provide
dimensional bounds~{\cite{Navascues2014,Navascues2015a}}, quantify
entanglement~{\cite{Moroder2013}} or characterize in a robust manner quantum
devices~{\cite{Yang2014,Bancal2015}}. Similar hierarchies were studied in
their mathematical abstract
setting~{\cite{Helton2002,Cafuta2011,Navascues2012,Burgdorf2012,Burgdorf2013}}.
The complexity of those semidefinite relaxations increases rapidly with the
relaxation degree. To address that problem, the symmetries of the original
problem can be applied to the semidefinite relaxations and reduce the problem
size. The technique was introduced in~{\cite{Gatermann2004}} in the
commutative case and reviewed in~{\cite{Parrilo2005,Riener2012}}.

In quantum information, symmetry techniques have been applied to semidefinite
programs: in quantum control~{\cite{Branczyk2007}} or in quantum
metrology~{\cite{Chiribella2012}}. In the specific case of sums of squares
relaxations, symmetry techniques were applied to
self-testing~{\cite{Bamps2015}} and translation-invariant Bell
inequalities~{\cite{Fadel2017}}. We also mention the related work in
preparation~{\cite{Tavakoli2018a}} applying to optimization over
finite-dimensional quantum systems, based on randomized sampling
rather than exact algebraic methods.
As the problem sizes grow, semidefinite relaxations are not written by hand
but rather constructed using software libraries. Among others, we mention the
libraries YALMIP~{\cite{Lofberg2009a}}, GloptiPoly~{\cite{Henrion2009}},
SOSTOOLS~{\cite{sostools}}, SparsePOP~{\cite{Waki2008}} for the commutative
case; NCSOStools~{\cite{Cafuta2011}} and Ncpol2Sdpa~{\cite{Wittek2015}} in the
noncommutative case. Our implementation is particularly influenced by this
last package.

In the present manuscript, we introduce symmetry-adapted moment relaxations
for a variety of noncommutative optimization problems arising from quantum
information scenarios, along with a software library that automates their
formulation.
The use of symmetries leads to huge efficiency gains in that context.
Consider a semidefinite program in the canonical
form that involves a matrix of size $n \times n$ on a space of affine
dimension $m$. When using a primal-dual barrier method such as implemented by
SDPA~{\cite{Yamashita2012}}, the memory requirements scale as $\mathcal{O}
(m^2 + mn^2)$, and the CPU time per iteration scales\footnote{In the memory
requirements, the $\mathcal{O} (m^2)$ term represents the Schur complement
matrix and $\mathcal{O} (m n^2)$ is an upper bound that depends on the matrix
sparsity. The scaling of the CPU time per iteration has three parts: the
computation of the Schur complement in $\mathcal{O} (m n^3 + m^2 n^2)$, the
Cholesky decomposition in $\mathcal{O} (m^3)$ and various other contributions
in $\mathcal{O} (n^3)$.} as in $\mathcal{O} (m^3 + n^3 + m n^3 + m^2 n^2)$.
Thus, any method that reduces $m$ and/or $n$ has a great impact on the memory
and CPU requirements. Moreover, some recent SDP solvers only converge when the
solution has no degeneracies~{\cite{Zhao2010}}, and in general, reducing the
complexity can improve the precision of the solutions by 1-2 orders of
magnitude, as we will see in the present manuscript.

The manuscript is divided in four parts. In
Section~\ref{Sec:OptimizationOver}, we define formally optimization problems
over noncommutative polynomials and their symmetries. In contrast to previous
presentations, we emphasize the use of rewriting rules during monomial
expansion. In Section~\ref{Sec:MomentRelaxations}, we review the semidefinite
hierarchies based on moment relaxations; most importantly, we express the
variants due to Moroder et al.~{\cite{Moroder2013}} and Burgdorf et
al.~{\cite{Cafuta2011,Burgdorf2012,Burgdorf2013}} in a common framework. In
Section~\ref{Sec:Implementation}, we discuss the choices made in our
implementation, including the algorithms enabling symmetric formulations. We
present a practical application in Section~\ref{Sec:Application} by computing
high precision bounds for the $I_{3322}$
inequality~{\cite{Froissart1981,Sliwa2003,Collins2004}}.

\section{Optimization over noncommutative polynomials}

\label{Sec:OptimizationOver}We assume that the reader is familiar with moment
relaxations, as introduced
in~{\cite{Navascues2007,Doherty2008,Navascues2008a}} for problems in quantum
information. Our symmetric moment relaxations apply to optimization problems
defined using noncommutative polynomials. We use a modified version of the
presentation~{\cite{Navascues2012}}: first, we define the monomials
involved, their rewriting rules and symmetries, before defining noncommutative
polynomials over those monomials in a second step, and finally express
optimization problems over those polynomials. We are not overly concerned by
technicalities such as proving convergence: the optimal values and convergence
properties of our hierarchies match the original formulations published in the
literature.

\subsection{Monomials}

We consider a set of {\tmem{letters}} $\{ x_1, x_2, \ldots, x_n \}$, along
with an {\tmem{involution}} $\ast$ such that $((x_1)^{\ast})^{\ast} = x_1$. We
collect \ these letters in the set $\tmmathbf{x}$ along with their images
under $\ast$
\[ \tmmathbf{x}= \{ x_1, x_2, \ldots, x_n, x_1^{\ast}, x^{\ast}_2, \ldots,
   x_n^{\ast} \} . \]
We write $\mathcal{S}_{\ast}$ the group of all permutations of elements
$\tmmathbf{x}$ that commute with the involution: we require for all $\pi \in
\mathcal{S}_{\ast}$ that $\pi (x_i^{\ast}) = \pi (x_i)^{\ast}$. We write
$\mathcal{W}$ the free monoid on $\tmmathbf{x}$, defined as follows. A
{\tmem{word}} or {\tmem{monomial}} $w \in \mathcal{W}$ is written
\[ w = w_1 w_2 \ldots w_m \]
with $m = | w |$ the {\tmem{length}} of $w$. The identity element is the empty
word of zero length, denoted by $1$, and the monoid operation $\cdot$ is word
concatenation, i.e. for words $v, w \in \mathcal{W}$ written over letters as
$v = v_1 \ldots v_{\ell}$ and $w = w_1 \ldots w_m$ we have
\[ v \cdot w = v_1 \ldots v_{\ell} w_1 \ldots w_m . \]
The involution $\ast$ acts on a word $w \in \mathcal{W}$ as
\[ w^{\ast} = w^{\ast}_m w^{\ast}_{m - 1} \ldots w^{\ast}_1, \]
so that $(v \cdot w)^{\ast} = w^{\ast} \cdot v^{\ast}$. Thus $\mathcal{W}$ is
a $\ast$-monoid~{\cite{Drazin1978}}. An element $\pi \in \mathcal{S}_{\ast}$
acts on $w = w_1 w_2 \ldots w_m \in \mathcal{W}$ as
\[ \pi (w) = \pi (w_1) \pi (w_2) \ldots \pi (w_m), \qquad \pi (1) = 1. \]
We extend our free monoid $\mathcal{W}$ to $\mathcal{W}^0 =\mathcal{W} \cup \{
0 \}$ by the addition of a zero element. We define formally
\[ 0^{\ast} = 0, \qquad 0 \cdot 0 = 0, \qquad 0 \cdot w = 0, \qquad w \cdot 0
   = 0, \qquad \text{for all } w \in \mathcal{W}, \]
\[ \pi (0) = 0 \quad \text{for all } \pi \in \mathcal{S}_{\ast}, \qquad
   \text{and} \qquad | 0 | = - \infty . \]
A {\tmem{congruence}} $\sim$ on $\mathcal{W}^0$ is an equivalence relation
that satisfies, for all $x, y, a, b \in \mathcal{W}^0$,
\[ \left( v \sim x \text{ and } w \sim y \right) \qquad \Rightarrow \qquad
   v^{\ast} \sim x^{\ast} \text{ and } v \cdot w \sim x \cdot y. \]
Given a word $w \in \mathcal{W}^0$, we write $[w]_{\sim} = \{ v \in
\mathcal{W}^0 : v \sim w \}$ its congruence class, and $\tilde{\mathcal{W}}
=\mathcal{W}^0 / \sim$ the set of all such congruence classes. We define a
binary operation $\cdot$ on the set $\tilde{\mathcal{W}}$ by
\[ [v]_{\sim} \cdot [w]_{\sim} = [v \cdot w]_{\sim} \qquad \text{for all } v,
   w \in \mathcal{W}^0, \]
and we easily verify that $\tilde{\mathcal{W}}$ is a $\ast$-monoid, the
{\tmem{quotient monoid}} of $\mathcal{W}$ by $\sim$. We define the symmetry
group $\mathcal{S}_{\sim} \subseteq \mathcal{S}_{\ast}$ as containing
permutations that preserve congruence
\begin{equation}
  \label{Eq:GroupCongruenceCompatible} \mathcal{S}_{\sim} = \left\{ \pi \in
  \mathcal{S}_{\ast} : \pi (v) \sim \pi (w) \text{ for all } v, w \in
  \mathcal{W}^0 \text{ such that } v \sim w \right\} .
\end{equation}
Then the action of $\mathcal{S}_{\sim}$ on $\tilde{\mathcal{W}}$ is well
defined:
\[ \pi ([w]_{\sim}) = [\pi (w)]_{\sim} \qquad \text{for all } \pi \in
   \mathcal{S}_{\sim} \text{ and } w \in \mathcal{W}^0 . \]
For computational purposes, the congruence $\sim$ is represented by a set of
rewriting rules $R = \{ v_1 \rightarrow w_1, v_2 \rightarrow w_2, \ldots \}$,
which, given a monomial $x v_i y$, applies as
\[ x v_i y \rightarrow x w_i y \qquad \text{for all } x, y \in \mathcal{W}^0,
   (v_i \rightarrow w_i) \in R. \]
For example, these rewriting rules can encode commutation relations ($x_j x_i
\rightarrow x_i x_j$ for some pairs $i, j$).

A word $u$ is in {\tmem{normal form}} if it cannot be rewritten any further.
We write $\mathcal{N}_R (u)$ the normal form obtained after repeated
application of the rewriting rules $R$; we require the rewriting system to be
{\tmem{confluent}}, which means that the normal form of $u$ does not depend on
the order of rule application. We then define formally the congruence $\sim$
from the rewriting system $R$:
\[ v \sim w \qquad \Leftrightarrow \qquad \mathcal{N}_R (v) =\mathcal{N}_R (w)
   . \]
The definition~(\ref{Eq:GroupCongruenceCompatible}) becomes
\[ \mathcal{S}_{\sim} = \left\{ \pi \in \mathcal{S}_{\ast} : \mathcal{N}_R
   (\pi (v)) =\mathcal{N}_R (\pi (w)) \text{ for all } v, w \in \mathcal{W}^0
   \text{ such that } \mathcal{N}_R (v) =\mathcal{N}_R (w) \right\} . \]
Confluent rewriting systems can be constructed and verified using the
Knuth-Bendix completion algorithm, whose description and implementation is
outside the scope of our work. We require the user of our software to provide
a confluent rewriting system (confluent rules for common correlations
scenarios are provided below).

We work with rewriting rules that such that $\mathcal{N}_R (w)$ has minimal
length over the equivalence class of $w$. Thus, we define the length of $[w]$
as the length of $\mathcal{N}_R (w)$.

\subsection{Rewriting rules for quantum correlation scenarios}

We now give two examples of monoids $\tilde{\mathcal{W}}$ along with their
rewriting rules $R$ and symmetry group $\mathcal{S}_{\sim}$.

\paragraph*{Binary outputs. ---}Consider a two-party Bell scenario where Alice
(resp. Bob) has input $x = 1,\ldots,m$ taking $m$ distinct values (respectively $y = 1,
\ldots, m$) \ and binary outputs $a = \pm 1$ (resp. $b = \pm 1$). We write $A_x$
(resp. $B_y$) the formal variable associated with the projective measurements
of Alice with eigenvalues $- 1$ and $+ 1$ (and the same for Bob). We have
\begin{equation}
  \label{Eq:BinaryOutcomesVariables} \tmmathbf{x}= \{ A_1, \ldots, A_m, B_1,
  \ldots, B_m, A^{\ast}_1, \ldots, A^{\ast}_m, B^{\ast}_1, \ldots, B^{\ast}_m
  \}
\end{equation}
and equivalence of monomials is defined by the rewriting rules
\[ \label{Eq:BinaryOutcomesRules} R = \left\{ A_x^{\ast} \rightarrow A_x,
   \qquad B_y^{\ast} \rightarrow B_y, \qquad B_y A_x \rightarrow A_x B_y,
   \qquad A_x A_x \rightarrow 1, \qquad B_y B_y \rightarrow 1 \right\} \]
for $x, y = 1, \ldots, m$. To simplify the computations in this self-adjoint case, we identify
$A_x^{\ast} = A_x$ and $B_y^{\ast} = B_y$ directly in $\tmmathbf{x}$. The
symmetry group $\mathcal{S}^{\tmmathbf{x}}_{\sim}$ contains the permutations
preserving the partition $\{ \{ A_1, \ldots, A_m \}, \{ B_1, \ldots, B_m \}
\}$.

\paragraph*{Multiple outputs. ---}We now generalize this example to the case
of $d \geqslant 2$ outcomes. Let Alice (respectively Bob) choose between $m$
projective measurements, each with $d$ outcomes. We write $A_{a | x}$
the formal variable associated with the projector corresponding to the output
$a$ and input $x$ (respectively $B_{b | y}$ for Bob). As projectors are
Hermitian, we identify $A^{\ast}_{a | x} = A_{a | x}$ and $B_{b
| y}^{\ast} = B_{b | y}$ and have
\[ \tmmathbf{x}= \{ A_{a | x} \}_{a, x} \cup \{ B_{b | y} \}_{b,
   y} \]
with the rewrite rules
\begin{multline*}
 R = \big\{ A_{a | x} A_{a | x} \rightarrow A_{a | x},
   \quad A_{a | x} A_{a' | x} \rightarrow 0, \quad B_{b |
     y} B_{b| y} \rightarrow B_{b | y}, \\
   \quad B_{b | y}
   B_{b' | y} \rightarrow 0, \quad B_{b | y} A_{a | x}
   \rightarrow A_{a | x} B_{b | y} \big\}
 \end{multline*}
for $a, a', b, b' = 1, \ldots, d$ and $x, y = 1, \ldots, m$, with $a' \neq a$,
$b' \neq b$. The symmetry group $\mathcal{S}_{\sim}$ contains all permutations
preserving the partitions $P_1$ and $P_2$:
\[ P_1 = \{ \{ A_{1 | 1}, \ldots, A_{d | m} \}, \{ B_{1 |
   1}, \ldots, B_{d | m} \} \}, \]
\[ P_2 = \{ \{ A_{1 | 1}, \ldots, A_{d | 1} \}, \ldots, \{ A_{1
   | m}, \ldots, A_{d | m} \}, \{ B_{1 | 1}, \ldots, B_{d
   | 1} \}, \ldots, \{ B_{1 | m}, \ldots, B_{d | m} \} \}
   . \]
Note that the relation $\sum_a A_{a | x} = \sum_b B_{b | y} = 1$
is not captured at the level of monomials.

\subsection{Noncommutative polynomials}

Given a set of letters $\tmmathbf{x}$, a list of rewrite rules $R$, and a
field $\mathbbm{K} \in \{ \mathbbm{R}, \mathbbm{C} \}$, we write $\mathbbm{K}
[\tilde{\mathcal{W}}] = \{ p \}$ the set of formal sums of the form
\[ p = \sum_{[w] \in \tilde{\mathcal{W}}} p_{[w]} [w], \qquad p_{[w]} \in
   \mathbbm{K}, \]
where $p_{[w]} = 0$ for all but finitely many $w$; in addition, we require
that $p_{[0]} = 0$. Such formal sums are noncommutative polynomials over the
monomials $\tilde{W}$. The {\tmem{degree}} of $p$ is given by the largest
$[w]$-length corresponding to a nonzero $p_{[w]}$. Addition on $\mathbbm{K}
[\tilde{\mathcal{W}}]$ is defined component-wise, and the multiplication is
defined by having the elements of $\mathbbm{K}$ commute with the elements of
$\tilde{\mathcal{W}}$ (and forcing $p_{[0]} = 0$). The involution acts as
\[ p^{\ast} = \sum_{[w] \in \tilde{\mathcal{W}}} (p_{[w]})^{\ast} [w]^{\ast},
\]
where $k^{\ast}$ is the complex conjugate of $k \in \mathbbm{K}$. With these
definitions, $\mathbbm{K} [\tilde{\mathcal{W}}]$ is a $\ast$-algebra. The
permutation group $\mathcal{S}_{\sim}$ acts naturally on $p$:
\[ \pi (p) = \sum_{[w] \in \tilde{\mathcal{W}}} p_{[w]}  [\pi (w)] \quad
   \text{for } \pi \in \mathcal{S}_{\sim} . \]

\subsection{Optimization problems}

Consider now the set $\mathcal{B} (\mathcal{H})$ of bounded operators on a
Hilbert space $\mathcal{H}$ defined on the field $\mathbbm{K}$, with
$\mathbbm{1} \in \mathcal{B} (\mathcal{H})$ the identity operator. Given a set
of operators $\tmmathbf{X}= (X_1, \ldots, X_n)$ and a polynomial $p \in
\mathbbm{K} [\tilde{\mathcal{W}}]$, we define the operator $p (X) \in
\mathcal{B} (\mathcal{H})$ by replacing
\[ 1 \rightarrow \mathbbm{1}, \qquad x_i \rightarrow X_i, \qquad x_i^{\ast}
   \rightarrow X_i^{\ast}, \]
in the normal form of $p$, where $X_i^{\ast}$ is the adjoint of $X_i$. Note
that the substitution is consistent only if the operators $\tmmathbf{X}$
satisfy the same relations $R$ as the variables $\tmmathbf{x}$. We evaluate
those polynomial on vectors $\phi \in \mathcal{H}$ by
\[ \langle p (\tmmathbf{X}) \rangle_{\phi} = \langle \phi | p (\tmmathbf{X}) |
   \phi \rangle, \]
noting that other choices are possible (see
Section~\ref{Sec:NPAGeneralization}). A polynomial for which $p = p^{\ast}$
is {\tmem{Hermitian}}, and in that case $p (\tmmathbf{X}) = p^{\ast}
(\tmmathbf{X})$ is a Hermitian operator. For Hermitian $p = p^{\ast}$, the
quantity $\langle p (\tmmathbf{X}) \rangle_{\phi}$ is real; this motivates the
following canonical form of a optimization problem over noncommutative
polynomials.
\begin{equation}
  \label{Eq:OptimizationProblem} p^{\star} = \sup_{\tmmathbf{X}, \phi} \langle
  p (\tmmathbf{X}) \rangle_{\phi}
\end{equation}
\begin{equation}
  \label{Eq:Constraints} \begin{array}{lrl}
    \text{subject to } & \langle \mathbbm{1} \rangle_{\phi} = 1, & \\
    & q_i (\tmmathbf{X}) \succeq 0, & i \in \mathcal{I},\\
    & r_j (\tmmathbf{X}) = 0, & j \in \mathcal{J},\\
    & \langle s_k (\tmmathbf{X}) \rangle_{\phi} \geq 0, & k \in \mathcal{K},
  \end{array}
\end{equation}
where $\mathcal{I}$, $\mathcal{J}$ and $\mathcal{K}$ are index sets, all
$q_i$, $s_k$ and $p$ are Hermitian, the optimization is carried out over all
states $\phi \in \mathcal{H}$ defined on Hilbert spaces $\mathcal{H}$ of
arbitrary dimension and operators $\tmmathbf{X}= (X_1, \ldots, X_n)$ in
$\mathcal{B} (\mathcal{H})$ that satisfy the rewrite rules $R$. We denote by
$q_i (\tmmathbf{X}) \succeq 0$ the positive semidefiniteness of $q_i
(\tmmathbf{X})$, i.e. $\langle \psi | q_i (\tmmathbf{X}) | \psi \rangle \geq
0$ for all $\psi \in \mathcal{H}$ (not only for $\psi = \phi$).

Similarly to~{\cite{Wittek2015}} where they are called binomials, we allow
efficient handling of two-term equalities $v - w = 0$, where $v, w \in
\mathcal{W}$, by handling them at the level of the congruence $\sim$.

\subsection{Symmetries of optimization problems}

While the group $\mathcal{S}_{\sim}$ preserved the structure of the congruence
$\sim$, we define the {\tmem{ambient group}} $G \subseteq S_{\sim}$ that
preserves feasibility under the constraints~(\ref{Eq:Constraints})
\[ G = \left\{ \quad g \in S_{\sim} \quad : \quad (\tmmathbf{X}, \phi) \text{
   is feasible} \quad \Rightarrow \quad (g (\tmmathbf{X}), \phi) \text{ is
   feasible} \quad  \right\} . \]
The {\tmem{symmetry group}} $G_{\star}$ of the optimization problem also
preserves optimality:
\[ G_{\star} = \left\{ \quad g \in G \quad : \quad \langle p (g
   (\tmmathbf{X})) \rangle_{\phi} = \langle p (\tmmathbf{X}) \rangle_{\phi}
   \quad \text{for all} \quad (\tmmathbf{X}, \phi) \text{ is feasible} \quad
   \right\} . \]

\subsection{Signed monomials and generalized permutations}

For efficiency, we generalize slightly the permutations used to build the
groups $\mathcal{S}_{\ast}$, $\mathcal{S}_{\sim}$, $G$ and $G^{\star}$. A
{\tmem{generalized permutation}} $\pi$ on $n$ elements is defined by the
sequence of images
\[ (\pi_1, \ldots, \pi_n) = (\pm \rho_1, \ldots, \pm \rho_n), \]
where $\rho: i \mapsto
\rho_i$ is a standard permutation. The generalized
permutation $\pi$ acts on the integers $\{ - n, \ldots, - 1, 1, \ldots, n \}$
by
\[ \pi (i) = \tmop{sign} (i) \pi_{| i |} . \]
The group of the generalized permutations on $n$ elements is also called the
signed symmetric group. In the present case, we write
$\mathcal{S}_{\ast}^{_{\pm}}$ the signed symmetric group acting on the
{\tmem{signed letters}}
\[ \tmmathbf{x}^{\pm} = \{ \pm x_1, \ldots, \pm x_n, \pm x_1^{\ast}, \ldots,
   \pm x_n^{\ast} \} . \]
The signed monomials $\mathcal{W}^{\pm} = \{ w^{\pm} = \omega w_1 \ldots w_m
\}$ are defined as product of letters preceded with a sign $\omega = \pm 1$.
Given such $w^\pm\in\mathcal{W}^\pm$, we define $\tmop{sign} (w^{\pm}) = \omega$ and $\tmop{abs} (w^{\pm}) = |
w^{\pm} | = w_1 \ldots w_m$.
The action of $\pi \in \mathcal{S}_{\ast}^{_{\pm}}$ on $w^{\pm} \in
\mathcal{W}^{\pm}$ is
\[ \pi (\omega x_{i_1} \ldots x_{i_m}) = \omega \tmop{sign} (\pi (i_1) \ldots
   \pi (i_m)) x_{| \pi (i_1) |} \ldots x_{| \pi (i_m) |} . \]
We consider the equivalence classes of $\mathcal{W}^{\pm}$ under the rewriting
rules $R$, noting that $R$ does not affect the sign: thus, elements of
$\mathcal{W}^{\pm} / \sim$ are simply written $\pm [w]$ with $[w] \in
\mathcal{W}/ \sim$.
Similarly, $\mathcal{S}_{\ast}^{_{\pm}}$ can be restricted to be compatible
with the congruence $\sim$, so that $\mathcal{S}_{\sim}^{\pm}$ acts
consistently on the equivalence classes of $\mathcal{W}^{\pm} / \sim$. For
example, the rewrite rule $x_i x_i \rightarrow x_i$ is not compatible with the
generalized permutation $\pi$ that sends $x_i$ to $\pi (x_i) = - x_i$.
However, $\pi (x_i) = - x_i$ would be compatible with the rewrite rules for
variables~(\ref{Eq:BinaryOutcomesRules}), and generalized permutations lead to
huge gains of efficiency on quantum correlation scenarios involving binary
outputs. Finally, we identify $\mathbbm{K}$-linear combinations of signed
monomials $\mathbbm{K} [\tilde{W}^{\pm}]$ with polynomials in $\mathbbm{K}
[\tilde{W}]$ by writing
\[ p = \sum_{[w^{\pm}] \in \tilde{\mathcal{W}}^{\pm}} p_{[w^{\pm}]} [w^{\pm}]
   = \sum_{[w^{\pm}]} \tmop{sign} ([w^{\pm}]) p_{[w^{\pm}]}  [\tmop{abs}
   (w^{\pm})], \]
and the action of $\mathcal{S}_{\sim}^{\pm}$ on $\mathbbm{K} [\tilde{W}]$
follows.

\subsection{Example: the CHSH inequality}

\label{Sec:CHSHInequality}We consider a two-party Bell scenarios with binary
inputs and outputs, i.e. $x, y = 0, 1$ and $a, b = \pm 1$. The measurements
are represented by Hermitian operators $\tmmathbf{x}= \{ A_0, A_1, B_0, B_1
\}$, along with the rewriting rules~(\ref{Eq:BinaryOutcomesRules}). The group
$S_{\sim}^{\pm}$ is generated by the generalized permutations (abusing slightly the notation)
\begin{equation}
  \label{Eq:CHSHGenerators} \pi_1 = (B_0, B_1, A_0, A_1), \qquad \pi_2 = (A_0,
  A_1, B_1, B_0), \qquad \pi_3 = (A_0, - A_1, B_0, B_1),
\end{equation}
where $\pi_1$ permutes the parties, $\pi_2$ permutes the inputs of Bob, and
$\pi_3$ is a conditional permutation of the outputs of Alice. The group
$S_{\sim}^{\pm}$ is of order 128. We do not need to add explicitly the
constraints
\[ (1 \pm A_x) \succeq 0, \qquad (1 \pm B_y) \succeq 0, \]
as $(1 \pm A_x) / 2$ are both projectors: for example $(1 + A_x) / 2 \sim (1 + A_x)^2 /
4$. Thus $G^{\pm} = S_{\sim}^{\pm}$.

Our goal is to maximize the value of the CHSH expression~{\cite{Clauser1969}}
\[ p_{\text{CHSH}} = [A_0 B_0] + [A_0 B_1] + [A_1 B_0] - [A_1 B_1] \]
without constraints $q_i, r_j$ or $s_k$. The expression $p_{\tmop{CHSH}}$ is
symmetric under the group $G_{\star}^{\pm}$ generated by
\[ \sigma_1 = \pi_1, \qquad \sigma_2 = \pi_2 \pi_3, \qquad \sigma_3 = (- A_0,
   - A_1, - B_0, - B_1) \]
of order $16$.

\section{Moment relaxations}
\label{Sec:MomentRelaxations}

We now define moment relaxations of the optimization problems we just
introduced. First, we present their standard formulation, before discussing
their symmetrization. We conclude this section by solving a concrete example by hand.

\subsection{Definition}
Moment relaxations arise from the existence of a
linear functional $\mathcal{L}: \mathbbm{K} [\tilde{\mathcal{W}}] \rightarrow
\mathbbm{K}$ which satisfies
\begin{equation}
  \label{Eq:LinearFunctional} \mathcal{L} ([1]) = 1, \quad \mathcal{L} (f)
  =\mathcal{L} (f^{\ast})^{\ast}, \quad \mathcal{L} (f^{\ast} f) \geq 0, \quad
  \text{for all } f \in \mathbbm{K} [\tilde{\mathcal{W}}],
\end{equation}
\[ \begin{array}{rll}
     \mathcal{L} (f^{\ast} q_i f) \geq 0, & \text{for all } f \in \mathbbm{K}
     [\tilde{\mathcal{W}}], & i \in \mathcal{I},\\
     \mathcal{L} (fr_j g) = 0, & \text{for all } f, g \in \mathbbm{K}
     [\tilde{\mathcal{W}}], & j \in \mathcal{J},\\
     \mathcal{L} (s_k) \geq 0, &  & k \in \mathcal{K}.
   \end{array} \]
Any feasible solution $(\tmmathbf{X}, \phi)$ defines a linear functional
\[ \mathcal{L}_{(\tmmathbf{X}, \phi)} (f) = \langle f (\tmmathbf{X})
   \rangle_{\phi} \]
that satisfies~(\ref{Eq:LinearFunctional}). Moment relaxations are defined as
a relaxation of the constraints~(\ref{Eq:LinearFunctional}), by considering
test polynomials $f, g \in \mathbbm{K} [\tilde{\mathcal{W}}]$ such that the
final expressions evaluated by $\mathcal{L}$ involve only polynomials of
maximal degree $2 d$, for some $d \geqslant 1$:
\[ \mathbbm{K} [\tilde{\mathcal{W}}]^{2 d} = \{ f \in \mathbbm{K}
   [\tilde{\mathcal{W}}] : \deg (f) \leqslant 2 d \} . \]
Remark that the restriction $\mathcal{L}: \mathbbm{K} [\tilde{\mathcal{W}}]^{2
d} \rightarrow \mathbbm{K}$ is fully defined by
\[ \mathcal{L} (f) = \sum_{\deg [w] \leq 2 d} f_{[w]} y_{[w]}, \qquad y_{[w]}
   \equiv \mathcal{L} ([w]), \]
as by linearity $\mathcal{L}$ is completely characterized by the values
$\vec{y} \in \mathbbm{K}^{N_{2 d}}$, where $N_D$ is the number of $[w]$ of
degree at most $D$.

We are now ready to express our constraints~(\ref{Eq:LinearFunctional}) in
semidefinite form. The linear constraints are:
\begin{equation}
  \label{Eq:SDPConstraints1} y_{[1]} = 1, \quad y_{[w]} =
  (y_{[w^{\ast}]})^{\ast}, \quad \sum_{[u]  [v]  [w]} (r_j)_{[v]} y_{[u v w]}
  = 0, \quad \sum_{[w]} (s_k)_{[w]} y_{[w]} \geq 0,
\end{equation}
while semidefinite constraints are given by the {\tmem{moment matrix}} $\Xi$
and the {\tmem{localizing matrices}} $\Lambda_i$:
\begin{equation}
  \label{Eq:SDPConstraints2} \Xi = \sum_{\deg ([u], [v]) \leq d} y_{[u^{\ast}
  v]} E^{u v} \succeq 0, \qquad \Lambda_i = \sum_{\deg ([u], [v]) \leq d, [w]}
  (q_i)_{[w]} y_{[u^{\ast} w v]} E^{u v} \succeq 0,
\end{equation}
with $E^{u v}$ a $N_d \times N_d$ matrix whose rows and columns indices $(r,
c)$ correspond to an ordering of the monomials $[w]$ and
\[ (E^{u v})_{r, c} = \left\{ \begin{array}{ll}
     1 & \text{if } r = u \text{ and } c = v,\\
     0 & \text{otherwise},
   \end{array} \right. \]
and, above, sums run over $[u], [v], [w]$ with the restriction that the
$y_{[\ldots]}$ is indexed by a monomial of degree at most $2 d$; $i, j, k$ run
over their respective index sets $\mathcal{I}, \mathcal{J}, \mathcal{K}$. Note
that, depending on the degree of $q_i$, rows and columns of the matrices
$\Lambda_i$ are omitted, see~{\cite{Navascues2012}} for details.
The final semidefinite program is given by
\begin{equation}
  \label{Eq:SemidefiniteRelaxation} \tilde{p}^{\star} = \max_{\vec{y} \in
  \mathbbm{K}^{N_{2 d}}} \sum_{\deg ([w]) \leq 2 d} p_{[w]} y_{[w]},
\end{equation}
such that the constraints~(\ref{Eq:SDPConstraints1})
and~(\ref{Eq:SDPConstraints2}) are satisfied, and $\tilde{p}^{\star}$ is an
upper bound on $p^{\star}$.

\subsection{Symmetric moment relaxations}

We recall that the symmetry group $G^{\pm}_{\star}$ preserves feasibility and
optimality of solutions $(\tmmathbf{X}, \phi) \text{}$. We now consider the
impact of such symmetries on moment relaxations. For that, we note that
$\mathcal{S}_{\sim}$ acts on $\vec{y} = (y_{[w]})$ by
\[ s (\vec{y})_{s ([w])} = y_{[w]}, \]
and, for signed monomials, we define formally $y_{- [w]} = - y_{[w]}$ for $[w]
\in \tilde{\mathcal{W}}$.

\begin{proposition}
  Let $G^{\pm}_{\star}$ be the symmetry group preserving feasibility. Then we
  can add the following constraint to the semidefinite
  relaxation~(\ref{Eq:SemidefiniteRelaxation}):
  \[ \vec{y} = g (\vec{y}), \qquad \text{for all } g \in G^{\pm}_{\star} . \]
\end{proposition}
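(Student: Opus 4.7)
The plan is to argue by a standard orbit-averaging argument: given any SDP-feasible $\vec{y}$, show that $g(\vec{y})$ is feasible with the same objective value for every $g\in G^\pm_\star$, and then observe that by convexity of the SDP feasible set, the symmetrization $\bar{\vec{y}} = \frac{1}{|G^\pm_\star|}\sum_{g\in G^\pm_\star}g(\vec{y})$ is still feasible, achieves the same objective, and satisfies $\bar{\vec{y}} = g(\bar{\vec{y}})$ for all $g\in G^\pm_\star$. Hence imposing $G^\pm_\star$-invariance cannot change the optimum.

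The main work is to verify that the action $\vec{y}\mapsto g(\vec{y})$ preserves each of the SDP constraints~(\ref{Eq:SDPConstraints1})--(\ref{Eq:SDPConstraints2}). For the trivial normalization $y_{[1]}=1$ and the conjugation law $y_{[w]}=y_{[w^\ast]}^\ast$, this is immediate because $g([1])=[1]$ and $g$ commutes with the involution (by construction of $\mathcal{S}_\sim^\pm$). For the equality constraints coming from the $r_j$ and inequality constraints from the $s_k$, one uses that $g\in G^\pm_\star\subseteq G$ preserves feasibility of quantum realizations: writing $\mathcal{L}_{(\tmmathbf{X},\phi)}$ for the functional induced by a feasible realization and observing that $\mathcal{L}_{(g(\tmmathbf{X}),\phi)} = g(\mathcal{L}_{(\tmmathbf{X},\phi)})$, the constraints on $r_j$ and $s_k$ are preserved because the polynomial relations themselves are mapped into the same set (as sets of polynomial equalities/inequalities) under $g$. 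The semidefinite constraints require that the matrix $g(\Xi)$ obtained from $\Xi$ by reindexing is PSD; directly from the definition one checks that $g(\Xi)_{(u,v)} = \Xi_{(g^{-1}(u),\,g^{-1}(v))}$ (with an additional sign character in the generalized case), so that $g(\Xi) = P_g^\ast\, \Xi\, P_g$ for a signed permutation matrix $P_g$, and PSDness is preserved. The same argument applies verbatim to each localizing matrix $\Lambda_i$.

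For the objective value, the key fact is that $g\in G^\pm_\star$ preserves the Hermitian objective polynomial $p$ in the sense that $\langle p(g(\tmmathbf{X}))\rangle_\phi = \langle p(\tmmathbf{X})\rangle_\phi$; translated to the level of coefficients, this means that the linear form $\vec{y}\mapsto \sum_{[w]} p_{[w]} y_{[w]}$ is invariant under $\vec{y}\mapsto g(\vec{y})$, so $\bar{\vec{y}}$ achieves the same objective as $\vec{y}$. Averaging then yields the $G^\pm_\star$-invariant feasible point of equal objective value, proving the proposition.

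The step I expect to need the most care is the verification that the constraints $r_j$ and $s_k$ are preserved under the group action at the relaxation level (as opposed to the quantum realization level), since the definition of $G$ is given in terms of feasibility of pairs $(\tmmathbf{X},\phi)$. The cleanest way to handle this is to argue that the preservation of feasibility at the quantum level, together with linearity, forces $g$ to permute (or linearly combine within the span of) the polynomials $\{r_j\}$ and $\{s_k\}$ that encode the constraints, so that every SDP-feasible $\vec{y}$ has $g(\vec{y})$ satisfying the same linear constraints; the signed-permutation structure of $\mathcal{S}_\sim^\pm$ accommodates the $\pm 1$ coefficients that arise.
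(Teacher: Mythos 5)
Your proposal is correct and follows the same overall strategy as the paper's proof: show that the group action preserves feasibility and the objective, then use convexity to replace an optimum by its orbit average $\frac{1}{|G^\pm_\star|}\sum_g g(\vec{y})$, which is manifestly invariant. The difference lies in \emph{where} the invariance is checked. The paper verifies feasibility of $g(\vec{y})$ only for moment vectors of the form $\vec{y}(\tmmathbf{X},\phi)$ induced by a feasible quantum pair, and then applies the averaging step to an arbitrary SDP optimum $\vec{y}^\star$ --- implicitly assuming that the SDP feasible set itself is $G^\pm_\star$-invariant. You instead verify that invariance directly, constraint by constraint, for an arbitrary SDP-feasible $\vec{y}$ (normalization, conjugation law, the signed-permutation conjugation $g(\Xi)=P_g^\ast\,\Xi\,P_g$ of the moment and localizing matrices, and the linear constraints). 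That is precisely the step needed to justify averaging an SDP optimum that need not arise from any quantum realization, so your version is the more complete argument; what it costs is the extra bookkeeping, which you carry out correctly. The one soft spot is your claim that preservation of quantum feasibility ``forces $g$ to permute (or linearly combine within the span of) the polynomials $\{ r_j \}$ and $\{ s_k \}$'': the definition of $G$ only requires that $g$ map the quantum feasible set into itself, which does not by itself imply that the SDP-level localizing and linear constraints are permuted among themselves, so this should be stated as an additional hypothesis on $G^\pm_\star$ or checked for the specific constraint families at hand. This is not a defect relative to the paper, which does not address the point either and whose implementation in fact supports no $q_i$, $r_j$, $s_k$ constraints --- in that case only the moment matrix and normalization constraints remain and your verification is complete.
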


\begin{proof}
  Let $(\tmmathbf{X}, \phi) \text{}$ be an optimal, feasible, solution, and
  let $(g (\tmmathbf{X}), \phi)$ be an orbit of optimal, feasible, solutions
  under $G^{\pm}_{\star}$. Let $\vec{y} (\tmmathbf{X}, \phi)$ be the solution
  of~(\ref{Eq:SemidefiniteRelaxation}) corresponding to $(\tmmathbf{X},
  \phi)$; then $g (\vec{y}) = \vec{y} (g (\tmmathbf{X}), \phi)$ is also a
  feasible solution of~(\ref{Eq:SemidefiniteRelaxation}) with
  $\tilde{p}^{\star} (\vec{y}) = \tilde{p}^{\star} (g (\vec{y}))$. Now, the
  constraints of the semidefinite program~(\ref{Eq:SemidefiniteRelaxation})
  are convex in $\vec{y}$. So, we can replace any optimal $\vec{y}^{\star}$ by
  \[ \mathcal{R}_{G^\pm_{\star}} (\vec{y}^{\star}) = \frac{1}{| G^{\pm}_{\star} |}
     \sum_{g \in G^\pm_{\star}} g (\vec{y}^{\star}), \]
  to obtain a symmetric optimal solution. By definition,
  $\mathcal{R}_{G^{\pm}_{\star}} (\vec{y}^{\star})$ is invariant under
  $G^{\pm}_{\star}$.
\end{proof}

We now restrict $\vec{y}$ to the symmetric subspace $\vec{y}
=\mathcal{R}_{G^{\pm}_{\star}} (\vec{y})$; thus, we have that
\begin{equation}
  \label{Eq:CanonicalUnderSymmetry} y_{[w]} = y_{\mathcal{C}_{G^{\pm}_{\star}}
  [w]},
\end{equation}
where $\mathcal{C}_{G^{\pm}_{\star}} [w]$ is a canonical representative of
$[w]$ under the symmetry group $G^{\pm}_{\star}$, selected by minimality under graded
lexicographic ordering. Symmetries on the moments $\vec{y}$ translate to
symmetries at the level of the semidefinite matrices. We focus on the main
moment matrix $\Xi$, as the localizing matrices are usually much smaller and
do not appear so frequently in practice\footnote{However, the same procedure
can be applied on $\Lambda_i$; for $\Lambda_i$ corresponding to the constraint
$q_i \succeq 0$, one will need to consider the subgroup $G^\pm_{q_i} \subseteq
G^\pm_{\star}$ that leave this particular $q_i$ invariant as well.}. Remember that
$\Xi$ has rows and columns indexed by the elements $\tilde{\mathcal{W}}$ with
$\Xi_{[r], [c]} = y_{[r^{\ast} c]}$. Then
\[ \Xi_{g ([r]), g ([c])} = y_{[g (r^{\ast}) g (c)]} = y_{[g (r^{\ast} c)]} =
   (g^{- 1} (\vec{y}))_{[r^{\ast} c]} = \vec{y}_{[r^{\ast} c]} = \Xi_{[r],
   [c]} \]
and the matrix $\Xi$ is invariant under the simultaneous permutation of rows
and columns by $G^{\pm}_{\star}$. To cater for generalized permutations in the
above, we define
\[ \Xi_{r, c} = - \Xi_{- r, c} = - \Xi_{r, - c} = \Xi_{- r, - c} \]
for $r, c = 1, \ldots, N_d$. As the matrix $\Xi$ is invariant under the
simultaneous action of a (signed) permutation group on its rows and columns,
it can be block-diagonalized: we refer the reader to~{\cite{Gatermann2004}}
for a clear explanation of the exploitation of such block structures in
semidefinite programs.

\subsection{Example}

We come back to the example of Section~\ref{Sec:CHSHInequality} and consider a
relaxation of degree $1$. We index our moment matrix $\Xi$ using the sequence
of monomials $([1], [A_0], [A_1], [B_0], [B_1])$ such that
\[ \Xi = \left(\begin{array}{ccccc}
     y_{[1]} & y_{[A_0]} & y_{[A_1]} & y_{[B_0]} & y_{[B_1]}\\
     & y_{[1]} & y_{[A_0 A_1]} & y_{[A_0 B_0]} & y_{[A_0 B_1]}\\
     &  & y_{[1]} & y_{[A_1 B_0]} & y_{[A_1 B_1]}\\
     &  &  & y_{[1]} & y_{[B_0 B_1]}\\
     &  &  &  & y_{[1]}
   \end{array}\right), \]
as the matrix is symmetric the elements of the lower triangle are conjugates
of those in the upper triangle. Under symmetrization, we get that
\[ y_{[A_0]} = - y_{[A_0]} = y_{[A_1]} = y_{[B_0]} = y_{[B_1]}, \qquad y_{[A_0
   B_0]} = y_{[A_0 B_1]} = y_{[A_0 B_1]} = - y_{[A_1 B_1]}, \]
\[ y_{[A_0 A_1]} = - y_{[A_0 A_1]}, \qquad y_{[B_0 B_1]} = - y_{[B_0 B_1]}, \]
and thus our semidefinite program simplifies to a program involving a single
variable:
\[ \tilde{p}^{\star} = \max_{x \in \mathbbm{R}} 4 x \]
\[ \text{subject to } \Xi = \left(\begin{array}{ccccc}
     1 & 0 & 0 & 0 & 0\\
     & 1 & 0 & x & x\\
     &  & 1 & x & - x\\
     &  &  & 1 & 0\\
     &  &  &  & 1
   \end{array}\right) \succeq 0. \]
Finally, note that $\Xi$ can be fully diagonalized as $U^{\ast}~ \Xi~ U = \tmop{diag} \left( 1, 1 - \sqrt{2} x, 1 - \sqrt{2} x, 1 +
   \sqrt{2} x, 1 + \sqrt{2} x \right)$ using
\[  \qquad U = \left(\begin{array}{rrrrr}
     1 & 0 & 0 & 0 & 0\\
     0 & - 1 / 2 & - 1 / 2 & 1 / 2 & 1 / 2\\
     0 & 1 / 2 & - 1 / 2 & - 1 / 2 & 1 / 2\\
     0 & 0 & \sqrt{2} / 2 & 0 & \sqrt{2} / 2\\
     0 & \sqrt{2} / 2 & 0 & \sqrt{2} / 2 & 0
   \end{array}\right), \]
and we easily recover the bound $\tilde{p}^{\star} = 2 \sqrt{2}$ on the value
of the CHSH inequality.

\subsection{Generalizations of the NPA hierarchy}
\label{Sec:NPAGeneralization}
The original NPA hierarchy~{\cite{Navascues2007,Navascues2008a}} employs
states $\phi \in \mathcal{H}$ and the evaluation $\langle p (\tmmathbf{X})
\rangle_{\phi} = \langle \phi | p (\tmmathbf{X}) | \phi \rangle$ for the
polynomial $p$. The PPT hierarchy introduced in~{\cite{Moroder2013}} instead
employs density matrices $\rho \in \mathcal{B} \left( \mathcal{H}_{\text{A}}
\otimes \mathcal{H}_{\text{B}} \right)$ with positive partial transpose
($\rho^{\top_{\text{B}}} \succeq 0$), and the evaluation rule
\[ \langle p (\tmmathbf{X}) \rangle_{\rho} = \tmop{tr} [\rho ~ p (\tmmathbf{X})]
   . \]
In our relaxations, it translates to the constraint that
\begin{equation}
  \label{Eq:PPTEvaluation} \mathcal{L} (\alpha \beta) =\mathcal{L} (\alpha
  \beta^{\ast})
\end{equation}
if $\alpha$ (resp. $\beta$) is a product of operators acting only on
$\mathcal{H}_{\text{A}}$ (resp. $\mathcal{H}_{\text{B}}$). The tracial moment
hierarchy~{\cite{Cafuta2011,Navascues2012,Burgdorf2012,Burgdorf2013}} does not
only employ states for the evaluation, but rather defines
\[ \langle p (\tmmathbf{X}) \rangle = \frac{1}{d} \tmop{tr} [p
   (\tmmathbf{X})], \]
which translates to
\begin{equation}
  \label{Eq:TraceEvaluation} \mathcal{L} (f g) =\mathcal{L} (g f)
\end{equation}
for arbitrary $f, g \in \tilde{\mathcal{W}}$ due to the cyclic property of the
trace. A framework for such generalizations is discussed in
Section~\ref{Sec:LinearEvaluation}.

\section{Implementation details}

\label{Sec:Implementation}We built our software library in Scala, a language
that provides four main advantages: it runs on the Java virtual machine (an
optimal combination of portability and speed), it has a strong type system
able to encode mathematical abstractions~{\cite{Oliveira2010}}, it provides a
flexible syntax well-suited to the creation of domain specific
languages~{\cite{DeVito2011}}, and it interfaces with good libraries to
represent exact number types (rational, cyclotomic or algebraic numbers). We
now walk through key parts of our implementation. As the library is under
active development, we refer the user to the up-to-date documentation present
on the repository
\href{https://github.com/denisrosset/symdpoly}{https://github.com/denisrosset/symdpoly}.

\subsection{Definition of the free algebra}

We exploit the syntax of the Scala programming language. We start by defining
the $\ast$-monoid $\mathcal{W}$, by creating an object extending
\tmverbatim{free.MonoidDef}. As
seen by the user, the variables $\tmmathbf{x}= \{ x_1, x_2, \ldots, x_n,
x_1^{\ast}, x^{\ast}_2, \ldots, x_n^{\ast} \}$ are represented by data classes
with an arbitrary number of indices. Internally, however, we work with
integers indexing all possible instances of those variables; the range of
possible instances is provided by the companion object property \tmverbatim{allInstances},
and all operator variables are enumerated in a variable \tmverbatim{operators}.
The \tmverbatim{adjoint} method of each variable returns $x_i^{\ast}$ given $x_i$.
Convenience implementations are provided by the \tmverbatim{HermitianOp} and \tmverbatim{HermitianType\#} base classes.
We use Scala pattern matching to provide readable notation. For the example of Section~\ref{Sec:CHSHInequality}:

\begin{tmcode}
object Free extends free.MonoidDef \{

  case class A(x: Int) extends HermitianOp
  object A extends HermitianType1(0 to 1)

  case class B(y: Int) extends HermitianOp
  object B extends HermitianType1(0 to 1)

  val operators = Seq(A, B)
\}
\end{tmcode}

Internally, we build a table of the adjoints for all possible instances, so future
processing only requires a single array lookup. Monomials are represented by a
length $n$ and an array of integer indices. By convention $n = - 1$ represents
the monomial $0$. Note that the \tmverbatim{Op} type is an inner type of the
object \tmverbatim{Free} written \tmverbatim{Free.Op} (a
path-dependent type), and the Scala type system will make sure that
\tmverbatim{Free.Op} instances are not mixed with variables from other
rings.

\subsection{Definition of the quotient algebra}

The quotient algebra is given by the rewrite rules $R$, expressed naturally
as:
\begin{tmcode}
val Quotient = quotient.MonoidDef(Free) \{
  case (A(x1), A(x2)) if x1 == x2 => Mono.one
  case (B(y1), B(y2)) if y1 == y2 => Mono.one
  case (B(y), A(x)) => A(x) * B(y)
  case (op1, op2) => op1 * op2
\}
\end{tmcode}

For now, we only allow rewrite rules where the left monomial is of degree two,
as they apply to a majority of optimization problems in quantum information.
Internally, we build an integer table \tmverbatim{rule(i,j)}, where
\tmverbatim{i}, \tmverbatim{j} run over indices of the first and second
variable, and the value in \tmverbatim{rule(i,j)} is interpreted as follows:

\begin{tabular}{|p{1.5cm}|p{1.5cm}|p{2.0cm}|p{1.0cm}|p{1.8cm}|p{2.0cm}|}
  \hline
  0 & 1 & 2 & 3 & 4 & 5\\
  \hline
  Set to 0 & Preserve & Remove both & Swap & Keep 1st, remove 2nd &
  Custom\\
  \hline
\end{tabular}

A value $5$/Custom requires an additional lookup in a dictionary, but all
other substitutions are fast. Monomial substitution is performed by Algorithm~\ref{Alg:NormalForm},
so that equivalence classes $[w]$ are represented by their normal form $\mathcal{N}_R
(w) \in \mathcal{W}^0$.

\begin{algorithm}
  \caption{Computation of monomial normal form}
  \label{Alg:NormalForm}
  
  \begin{tmindent}
    Input: length $n$, array of integers $(m_1, \ldots, m_n)$ representing the
    monomial
    
    Output: new length $n$ (or special flag $n = - 1$ indicating zero
    monomial), new array $(m_1, \ldots, m_n)$
    
    \
    
    $i \leftarrow 1$
    
    While $i \leqslant n - 1$
    
    \quad If \tmverbatim{rule}$(m_i, m_{i + 1}) = 0$
    
    \qquad$n \leftarrow - 1$
    
    \qquad Return
    
    \quad ElseIf \tmverbatim{rule}$(m_i, m_{i + 1}) = 1$
    
    \qquad$i \leftarrow i + 1$
    
    \quad Else\quad
    
    \qquad Perform substitution at the $i$-th position, update length $n$
    
    \qquad If $i \neq 4$ Then $i \leftarrow \max (i - 1, 0)$
    
    \quad End
    
    End
  \end{tmindent}
\end{algorithm}

\subsection{Definition of symmetries}

\label{Sec:DefinitionOfSymmetries}Generalized permutation of the variables are
declared again using pattern matching. In our
example~(\ref{Eq:CHSHGenerators}):

\tmtexttt{val p1 = Free.generator \{\\
\quad case A(i) => B(i)\\
\quad case B(i) => A(i)\\
\}\\
val p2 = Free.generator \{\\
\quad case B(0) => B(1)\\
\quad case B(1) => B(0)\\
\quad case A(i) => A(i)\\
\}\\
val p3 = Free.generator \{\\
\quad case A(1) => -A(1)\\
\quad case op => op // fallback, do nothing\\
\}}

The group $\mathcal{S}^\pm_{\sim}$ has to be explicitly constructed by the user:
\begin{tmcode}
val ambientGroup = Quotient.ambientGroup(p1, p2, p3)
\end{tmcode}
and is internally represented as a permutation group on (signed) indices of
variables using a stabilizer chain, see~{\cite{Holt2005}}.

\subsection{Definition of the optimization problem}

Polynomials entering in the optimization problem are defined using standard
mathematical notation. For example:
\begin{alltt}
def A(x: Int) = Quotient.quotient(Free.A(x))
def B(y: Int) = Quotient.quotient(Free.B(y))
val CHSH = A(0)*B(0) + A(0)*B(1) + A(1)*B(0) - A(1)*B(1)
\end{alltt}
Internally, polynomials are represented by the pairs $([w], p_{[w]})$ for
which $p_{[w]} \neq 0$, sorted using graded reverse lexicographic order. For
now, our library does not support constraints of the form $q_i (\tmmathbf{X})
\succeq 0$, $r_j (\tmmathbf{X}) = 0$ or $\langle s_k (\tmmathbf{X})
\rangle_{\phi} \geq 0$; thus only an objective polynomial $p$ can be provided
for now. This implies that $G^\pm = S^\pm_{\sim}$.

\subsection{Linear evaluation: rules and canonical form}

\label{Sec:LinearEvaluation}
The constraints~(\ref{Eq:LinearFunctional})
applying on the linear functional $\mathcal{L}$, with the possible addition
of~(\ref{Eq:PPTEvaluation}) or (\ref{Eq:TraceEvaluation}), do not apply at the
level of monomials, but only when performing the final evaluation.

Ncause our polynomials have real coefficients, we can have force $\vec{y}$ to be real as well and thus
$\vec{y}_{[w]} = \vec{y}_{[w^{\ast}]}$. That corresponds to invariance under global transposition.
Additional equivalence relations can be specified in the code, and correspond of predicate of two types:
\begin{itemizeminus}
  \item Transposition equivalence under a predicate $\mathcal{P}: \tmmathbf{x}
  \rightarrow \{ \tmop{true}, \tmop{false} \}$, that apply in place to the
  variables for which the predicate is true.
  
  \item Cyclic permutations under a predicate $\mathcal{P}: \tmmathbf{x}
  \rightarrow \{ \tmop{true}, \tmop{false} \}$, that apply in place to the
  variables for which the predicate is true.
\end{itemizeminus}
Let $w = t_1 t_2 f_1 t_3 f_2 f_3 t_4$ such that the predicate $\mathcal{P}$ is
true for the variables $t_i$ and false for the variables $f_i$. Then,
application of a transposition returns
\[ T_{\mathcal{P}} (w) = t_4^{\ast} t_3^{\ast} f_1 t_2^{\ast} f_2 f_3
   t_1^{\ast}, \]
while a single application of a cyclic permutation returns
\[ C_{\mathcal{P}} (w) = t_2 t_3 f_1 t_4 f_2 f_3 t_1 . \]
The canonical form $\mathcal{C} (w) \in \mathcal{W}^{\pm}$ of a signed
monomial $w \in \mathcal{W}^{\pm}$ is obtained by applying the rewriting rules
$R$ on all iterations of
\begin{itemizeminus}
  \item the symmetry group $G^\pm_{\star}$,
  
  \item partial transpositions $w \rightarrow \{ w, T_{\mathcal{P}} (w) \}$
  for all transposition predicates,
  
  \item cyclic permutations $w \rightarrow \{ w, C_{\mathcal{P}} (w),
  C_{\mathcal{P}} (C_{\mathcal{P}} (w)), \ldots \}$ for all cyclic permutation
  predicates,
\end{itemizeminus}
and keeping the minimal lexicographic representative along with its sign. In
the case that $\mathcal{C} (w) =\mathcal{C} (- w)$, we set formally
$\mathcal{C} (w) = 0$. For the problem sizes considered, brute force
evaluation is faster than algorithms exploiting the problem structure,
provided the code is optimized to operate on primitive types (machine-size
integers) during the enumeration. In our library, predicates are defined using
pattern matching.

\tmtexttt{val partialTransposeBob = PartialTransposition(Free) \{\\
\quad case A(i) => false\\
\quad case B(i) => true\\
\}\\
val fullCyclic = CyclicPermutation(Free) \{\\
\quad case op => true\\
\}}

\subsection{Computation of the symmetry group $G^\pm_{\star}$}

The group $G^\pm$ is provided by the user, see
Section~\ref{Sec:DefinitionOfSymmetries}. To construct the symmetry subgroup
$G^\pm_{\star}$ that preserves the objective value $p \in \mathbbm{K}
[\tilde{\mathcal{W}}]$, we proceed as follow.

We construct the smallest set of signed monomials $M \subset
\tilde{\mathcal{W}}^{\pm}$ that
\begin{itemizeminus}
  \item includes all monomials present in $p$,
  
  \item is invariant under action of $G^\pm$ (i.e. $w \in M \Leftrightarrow g (w)
  \in M$).
\end{itemizeminus}
We write $\mathcal{S}_M$ the symmetric group acting on $M$. As $G^\pm$ acts on $M$
as well, there exists a permutation group $H \subseteq S_M$, along with an
isomorphism $\varphi : G^\pm \rightarrow H$ representing this action.

Explicit steps are presented in Algorithm~\ref{Alg:SymmetryGroup}.
Fast algorithms based on stabilizer chains exist for the last two steps of the
algorithm~{\cite{Holt2005}}, and are implemented in GAP
System~{\cite{Group2007}} or our library Alasc~{\cite{Rosset2017d}}.

\begin{algorithm}
  \caption{Computation of the symmetry group $G^\pm_{\star}$}
  \label{Alg:SymmetryGroup}

Input: Objective polynomial $p$, set of monomials $M \subset
\tilde{\mathcal{W}}$, group $H$, isomorphism $\varphi$

Output: symmetry subgroup $G^\pm_{\star}$

\

Compute the partition $\mathcal{P}$ of $\mathcal{M}$ under the equivalence
relation $v^{\pm} \sim w^{\pm}$ if $y_{[\mathcal{C} (v^{\pm})]} =
y_{[\mathcal{C} (w^{\pm})]}$.

Compute the subgroup of $H_{\star} \subseteq H$ that stabilizes $\mathcal{P}$.

Return $G^\pm_{\star} = \varphi^{- 1} (H_{\star})$.
\end{algorithm}

\subsection{Construction of the symmetrized moment matrices}

We come to the final part of our method, the construction of moment matrices
to be provided to the solver. We use a value $T$ as a token, where $T$ is
larger than a crude upper bound on the number of final monomials, for example,
$T = 2^{31} - 1$.
The full method is presented in Algorithm~\ref{Alg:ReduceMoments}.
At the output, the matrices $C$ and $\{ A_k \}$ of the SDP are recovered with:
\[ C_{j k} = \left\{ \begin{array}{ll}
     1 & \text{if } J_{j k} = 1\\
     0 & \text{otherwise}
   \end{array} \right., \qquad (A_i)_{j k} = \left\{ \begin{array}{ll}
     \tmop{sign} (J_{j k}) & \text{if } | J_{j k} | = i\\
     0 & \text{otherwise}
   \end{array} \right. . \]
Note that with our convention, the variable $y_1$ is never used and $A_1 = 0$.
The variables $y_2, y_3, \ldots$ correspond to the unique moments that appear
in the SDP matrix according to the map $\mu$. To get the expression of the
objective in the symmetrized variables, initialize $\vec{b} \leftarrow \vec{0}
\in \mathbbm{R}^{N_M}$, and for each term $p_{[w]} [w]$ appearing in the
objective with real coefficient $p_{[w]}$ and monomial $[w]$, set
$b_{\mathcal{C} (w)} \leftarrow b_{\mathcal{C} (w)} + p_{[w]}$.

The resulting SDP program is thus:
\[ \begin{array}{rl}
     \max & \vec{b}^{\top} \cdot \vec{y}\\
     \text{over} & \vec{y} \in \mathbbm{R}^{N_M}\\
     \text{such that} & \chi = C - \sum_i y_i A_i \geqslant 0
   \end{array} \]
where $C$ and $\{ A_i \}$ are symmetric matrices in $\mathbbm{R}^{n \times
n}$.

\begin{algorithm}
  \caption{Computation of the symmetrized moment matrices}
  \label{Alg:ReduceMoments}
{\tmstrong{Input}}

List of generating monomials $V = (v_i)$, $i = 1, \ldots, N_D$

Symmetry group $G^\pm_{\star}$

Predicates of partial transpositions $\{ \mathcal{P}^T_i \}$, cyclic
permutations $\{ \mathcal{P}^C_i \}$

\

{\tmstrong{Output}}

$J$ a matrix of integers of size $n \times n$

$\mu$ a bidirectional map between operator sequences and integer indices

$N_M$ total number of matrices $\{ C, A_i \}$ in the SDP constraint

\

Initialize $J$ with the token value $T$

Initialize $\mu$ empty

$N_M \leftarrow 1$

For $i = 1, \ldots, n, j = i, \ldots, n$

\quad If $J_{i, j} \neq T$

\qquad Skip the current iteration, the current cell has already been computed

\quad End

\quad Compute the canonical $c =\mathcal{C} (N_R (v_i^{\dag} v_j))$ by
enumeration (global transposition, $G^\pm_{\star}$, $\{ \mathcal{P}^T_i \}$, $\{
\mathcal{P}^C_i \}$)

\quad If $c \in \{ 0, 1 \}$

\qquad$k \leftarrow c$

\quad ElseIf $\mu (c)$ is defined

\qquad$k \leftarrow \mu (c)$

\quad Else

\qquad$N_M \leftarrow N_M + 1$

\qquad$\mu (c) \leftarrow N_M$

\quad End

\quad For $(r, c) \in \{ (g (i), g (j)) : g \in G^\pm_{\star} \}$

\qquad$\sigma \leftarrow \tmop{sign} (r c)$

\qquad$J_{| r |, | c |} \leftarrow \sigma k$

\qquad$J_{| c |, | r |} \leftarrow \sigma k$

\quad End

End
\end{algorithm}

\section{Application: high precision bounds for the $I_{3322}$ inequality}

\label{Sec:Application}As a test of our technique, we apply our symmetrization
technique to the $I_{3322}$
inequality~{\cite{Froissart1981,Sliwa2003,Collins2004}}, in its variant
symmetric under permutation of parties:
\[ I_{3322} = (A_1 B_3 + A_3 B_1 - A_2 B_3 - A_3 B_2) + (A_1 B_1 - A_1 B_2 -
   A_2 B_1 + A_2 B_2) + A_1 + A_2 + B_1 + B_2, \]
which, in particular, is symmetric under permutation of parties, and under the
signed permutation:
\[ (A_1, A_2, A_3, B_1, B_2, B_3) \rightarrow (A_2, A_1, A_3, B_1, B_2, - B_3)
\]
which together generate a symmetry group of order $8$, which we identify as a
dihedral group. We compare three approaches:
\begin{itemizeminus}
  \item Symmetries=''no'': we construct the standard NPA relaxation.
  
  \item Symmetries=''partial'': we symmetrize our SDP as discussed in
  Section~\ref{Sec:Implementation}, and then split the blocks according to the
  symmetric/antisymmetric subspace under permutation of parties, which is the
  most simple block diagonalization possible.
  
  \item Symmetries=''diag'': we symmetrize our SDP, and then try to split the
  blocks as much as possible. Unfortunately, there is no general software
  package to perform full block-diagonalization in exact arithmetic. Thus, we
  performed the block diagonalization by hand. This explains the presence of
  $6$ blocks in our results, whereas the dihedral group of order $8$ has five
  rational representations, and thus we should expect a decomposition in at
  most 5 blocks.
\end{itemizeminus}
We see that the biggest reductions in memory usage are provided by the reduction of
the number of variables $m$ and the quite straightforward
block-diagonalization due to symmetry under permutation of parties; this is
not surprising as the memory usage is in general dominated by a factor
$\mathcal{O} (m^2)$. However, the block diagonalization has still a
non negligible impact on the CPU time, as it reduces the terms in $\mathcal{O}
(n^3)$. Computational results using the SDPA double-double precision
solver~{\cite{Nakata2010}} are reported in Table~\ref{Table:Comparison}.

\begin{table}[!ht]
  \begin{tabular}{|l|l|l|l|l|l|}
    \hline
    Level & Sym. & \# vars & SDP block sizes & CPU time (s) &
    Memory use (MB)\\
    \hline
    3 & no & 867 & 88 & 73.2 & 15\\
    \hline
    3 & partial & 124 & 44,44 & 4.3 & 3\\
    \hline
    3 & diag. & 124 & 22,22,13,11,11,9 & 1.2 & 2\\
    \hline
    4 & no & 4491 & 244 & 8416.0 & 331\\
    \hline
    4 & partial & 593 & 122,122 & 292.9 & 14\\
    \hline
    4 & diag. & 593 & 61,61,35,31,30,26 & 62.6 & 10\\
    \hline
  \end{tabular}
  \caption{\label{Table:Comparison}Relaxation levels, symmetry reduction
  method used and resources needed to solve successfully the SDP relaxation using the
  SDPA double-double precision solver.}
\end{table}

For completeness, we also computed the relaxation of level 5 using a
diagonalization in 4 blocks (sizes: 162+157+157+152), which completed in 8700
[s] using SDPA in double-double precision. We also recomputed levels 2-4 in
quadruple-double precision. The results are, along with the gap being the
difference between the objective value of the primal and dual problem reported
by the solver:
\begin{equation}
  I_2 \cong 1.2509397216370581 \text{ (gap} \sim 10^{- 31} \text{)},
\end{equation}
\begin{equation}
  I_3 \cong 1.2508755620230350 \text{ (gap} \sim 10^{- 31} \text{)},
\end{equation}
\begin{equation}
  I_4 \cong 1.2508753845139768 \text{ (gap} \sim 10^{- 30} \text{)},
\end{equation}
\begin{equation}
  I_5 \cong 1.2508753845139766 \text{ (gap} \sim 10^{- 21} \text{)} .
\end{equation}
There still seems to be a gap between $I_4$ and $I_5$, but the SDPA high
precision solvers only report $\sim 17$ digits. Using the VSDP
package~{\cite{Jansson2006}}, rigorous bounds can be computed for the
solution of a semidefinite program. Because the symmetrization reduces the
complexity of the problem, we should be able to observe an effect on the
numerical precision of the obtained results. Indeed, we computed a robust
solution for $I_3$ using VSDP and standard double precision arithmetic, to
obtain:
\begin{equation}
  I_3^{\text{without-symmetrization}} \in \left[ \tmstrong{1.2508755} 5,
  \tmstrong{1.2508755} 7 \right],
\end{equation}
and
\begin{equation}
  I_3^{\text{with-symmetrization}} \in \left[ \tmstrong{1.25087556} 1,
  \tmstrong{1.25087556} 8 \right],
\end{equation}
and we see that symmetrization provides an additional digit of precision.

\section{Conclusion}

We introduced symmetry-adapted moment relaxations for optimization problems
over noncommutative polynomials, with a particular emphasis on the problems
arising from quantum information scenarios. We also presented a software
library automating the discovery and use of the symmetries present in the
problem formulation. This work is only a first step in that journey. In
particular, we are looking to extend our software library in the following
directions. First, the code has not been tested on problems involving
non-Hermitian variables and polynomial with complex coefficients. Second, we
are lacking implementations of localizing matrices and support for general
linear constraints. These additions should be pretty straightforward, except
that automatic discovery of the full symmetry group could be difficult under
involved combinations of constraints. Additional gains can be
obtained using block diagonalization. As of today, there exists a variety of
algorithms to decompose an algebra of matrices commuting with the
representation of a group: the software library AREP~{\cite{Puschel2002}}
provides structured decomposition of permutation representations of solvable
groups; numerical approaches can decompose arbitrary
representations~{\cite{Maehara2010}}; finally, a recent preprint proposed an
algebraic method~{\cite{Kornyak2018}} based on Groebner bases. We also mention
a recent promising approach using Jordan algebras~{\cite{Permenter2016}}.

{\noindent}\tmtextbf{Acknowledgements . }We thank many colleagues for
discussions in the last four years; among them Jean-Daniel Bancal, Nicolas
Gisin, Yeong-Cherng Liang, Alejandro Pozas, Marc-Olivier Renou, Armin
Tavakoli, Jamie Sikora, Peter Wittek and Elie Wolfe. This work was supported
by the Swiss National Science Foundation via the Mobility Fellowship
P2GEP2\_162060, by the Perimeter Institute for Theoretical Physics. Research
at Perimeter Institute is supported by the Government of Canada through the
Department of Innovation, Science and Economic Development and by the Province
of Ontario through the Ministry of Research and Innovation. This publication
was made possible through the support of a grant from the John Templeton
Foundation. We also acknowledge support by the Ministry of Education, Taiwan,
R.O.C., through Aiming for the Top University Project granted to the National
Cheng Kung University (NCKU), and by the Ministry of Science and Technology,
Taiwan (Grants No. 104-2112-M-006-021-MY3).{\hspace*{\fill}}{\medskip}
\bibliography{symdpoly}

\end{document}